\newcommand{\cev}[1]{\reflectbox{\ensuremath{\vec{\reflectbox{\ensuremath{#1}}}}}}
\newcommand{\hlg}[1]{\sethlcolor{green}\hl{#1}\sethlcolor{green}}
\newcommand{\hly}[1]{\sethlcolor{yellow}\hl{#1}\sethlcolor{yellow}}
\newcommand{\myceil}[1]{\ensuremath{\lceil #1 \rceil}}
\newcommand{\myfloor}[1]{\ensuremath{\lfloor #1 \rfloor}}
\newlength{\RoundedBoxWidth}
\newsavebox{\GrayRoundedBox}
\newenvironment{GrayBox}[1]%
   {\setlength{\RoundedBoxWidth}{.93\columnwidth}
    \def\boxheading{#1}
    \begin{lrbox}{\GrayRoundedBox}
       \begin{minipage}{\RoundedBoxWidth}}%
   {   \end{minipage}
    \end{lrbox}
    \begin{center}
    \begin{tikzpicture}%
       \node(Text)[draw=black!20,fill=white,rounded corners,inner xsep=2ex,inner ysep=2ex,text width=\RoundedBoxWidth]
             {\usebox{\GrayRoundedBox}};
        \coordinate(x) at (current bounding box.north west);
        \node [draw=white,rectangle,inner sep=3pt,anchor=north west,fill=white]
        at ($(x)+(6pt,.75em)$) {\boxheading};
    \end{tikzpicture}
    \end{center}}
\newenvironment{defproblemx}[2]{\noindent\ignorespaces%
                                \FrameSep=6pt%
                                \parindent=6pt
                \vspace{-3mm}            
                \begin{GrayBox}{#1}%
                \begin{tabular*}{\columnwidth}{!{\extracolsep{\fill}}@{\hspace{.1em}} >{\itshape} p{#2} p{0.84\columnwidth} @{}}%
            }{\\[-1.5ex]
                \end{tabular*}%
                \end{GrayBox}%
                \ignorespacesafterend
                \vspace{-4mm}
            }
\newcommand{\problemQuestion}[3]{%
  \begin{defproblemx}{#1}{1.5cm}
    Input: & #2 \\
    Question: & #3
  \end{defproblemx}
}
\newcommand{\problemQuestionOutput}[3]{%
  \begin{defproblemx}{#1}{1.5cm}
    Input: & #2 \\
    Output: & #3
  \end{defproblemx}
}
\definecolor{defblue}{rgb}{0.121,0.47,0.705}
\DeclareTextFontCommand{\emph}{\color{defblue}\em}
\definecolor{lipicsblue}{rgb}{0.08235294118,0.3098039216,0.537254902}
\definecolor{linkblue}{rgb}{0.098,0.098,0.4392}
\definecolor{ourgreen}{rgb}{0.509,0.745,0.235}
\definecolor{indianred}{rgb}{0.804,0.361,0.361}
\definecolor{indianred1}{rgb}{1,0.416,0.416}
\definecolor{indianred3}{rgb}{0.804,0.333,0.333}
\definecolor{orangered}{rgb}{1,0.271,0}
\definecolor{coral1}{rgb}{1,0.447,0.337}
\definecolor{rosybrown2}{rgb}{0.933,0.231,0.231}
\definecolor{aquamarine4}{rgb}{0.271,0.545,0.455}
\definecolor{chartreuse3}{rgb}{0.4,0.804,0}
\definecolor{mediumpurple3}{rgb}{0.537,0.408,0.804}
\definecolor{mediumvioletred}{rgb}{0.78,0.082, 0.522}
\newcommand{\bigO}{\mathcal{O}}
\author{Susanna Caroppo}{Roma Tre University, Rome, Italy}{susanna.caroppo@uniroma3.it}{https://orcid.org/0009-0001-4538-8198}{}
\author{Giordano {Da Lozzo}}{Roma Tre University, Rome, Italy}{giordano.dalozzo@uniroma3.it}{https://orcid.org/0000-0003-2396-5174}{}
\author{Giuseppe {Di Battista}}{Roma Tre University, Rome, Italy}{giuseppe.dibattista@uniroma3.it}{https://orcid.org/0000-0003-4224-1550}{}
\authorrunning{Caroppo et al.}
\date{}
\title{Quantum Algorithms for One-Sided Crossing Minimization}
\titlerunning{Quantum Algorithms for OSCM}
\keywords{One-sided crossing minimization, quantum graph drawing, quantum dynamic programming, quantum divide and conquer, exact exponential algorithms.}
\begin{document}

\maketitle
\begin{abstract}
We present singly-exponential quantum algorithms for the {\sc One-Sided Crossing Minimization} (OSCM) problem. Given an $n$-vertex bipartite graph $G=(U,V,E\subseteq U \times V)$, a \emph{$2$-level drawing} $(\pi_U,\pi_V)$ of $G$ is described by a linear ordering $\pi_U: U \leftrightarrow \{1,\dots,|U|\}$ of $U$ and linear ordering $\pi_V: V \leftrightarrow \{1,\dots,|V|\}$ of~$V$. For a fixed linear ordering $\pi_U$ of $U$, the OSCM problem seeks to find a linear ordering $\pi_V$ of $V$ that yields a $2$-level drawing $(\pi_U,\pi_V)$ of $G$ with the minimum number of edge crossings. %
We show that OSCM can be viewed as a set problem over $V$ amenable for exact algorithms with a quantum speedup with respect to their classical counterparts.
First, we exploit the quantum dynamic programming framework of Ambainis et al. [{\em Quantum Speedups for Exponential-Time Dynamic Programming Algorithms}. SODA 2019] to devise 
a QRAM-based algorithm that solves  {OSCM} in $\bigO^*(1.728^n)$ time and space.
Second, we use quantum divide and conquer to obtain an algorithm that solves OSCM without using QRAM in $\bigO^*(2^n)$ time and~polynomial~space.
\end{abstract}

\clearpage

\section{Introduction}

We study, from the quantum perspective, the {\sc One-Side Crossing Minimization} (OSCM) problem, one of the most studied problems in Graph Drawing, which is defined below.

\subparagraph{2-Level Drawings.} In a 2-level drawing of a bipartite graph the vertices of the two sets of the bipartition are placed on two horizontal lines and the edges are drawn as straight-line segments. The number of crossings of the drawing is determined by the order of the vertices on the two horizontal lines.
More formally, let $G=(U,V,E)$ be a bipartite graph, where $U$ and $V$ are the two parts of the vertex set of $G$ and $E \subseteq U \times V$ is the edge set of $G$.
In the following, we write $n$, $n_U$, and $n_V$ for $|U \cup V|$, $|U|$, and $|V|$, respectively. Also, for every integer $h$, we use the notation $[h]$ to refer to the set $\{1,\dots,h\}$.
A \emph{2-level drawing} of $G$ is a pair $(\pi_U,\pi_V)$, where $\pi_U : U \leftrightarrow \{1,\dots,|U|\}$ is a linear orderings of $U$, and $\pi_V: V \leftrightarrow \{1,\dots,|V|\}$ is a linear ordering of~$V$. We denote the vertices of $U$ by $u_i$, with $i\in[n_U]$, and the vertices of $V$ by $v_j$, with $j\in[n_V]$.
Two edges $(u_1,v_1)$ and $(u_2,v_2)$ in $E$ \emph{cross} in $(\pi_U,\pi_V)$ if: (i) $u_1 \neq u_2$ and $v_1 \neq v_2$ and (ii) either $\pi_U(u_1)<\pi_U(u_2)$ and $\pi_V(v_2)<\pi_V(v_1)$, or $\pi_U(u_2)<\pi_U(u_1)$ and $\pi_V(v_1)<\pi_V(v_2)$. The number of crossings of a $2$-level drawing $(\pi_U,\pi_V)$ is the number $cr(G,\pi_U,\pi_V)$ of distinct (unordered) pairs of edges that cross.  

Problem OSCM is defined as follows:

\problemQuestionOutput{\sc One-Sided Crossing Minimization (OSCM)}%
{A bipartite graph $G=(U,V,E)$ and a linear ordering $\pi_U : U \leftrightarrow [n_U]$.}%
{A linear ordering $\pi_V : V \leftrightarrow [n_V]$ such that $cr(G,\pi_U,\pi_V)$ is minimum.}

\subparagraph{State of the art.}

The importance of the OSCM problem, which is \NP-complete~\cite{DBLP:journals/algorithmica/EadesW94} even for sparse graphs~\cite{DBLP:conf/gd/MunozUV01}, in Graph Drawing was first put in evidence by Sugiyama in~\cite{DBLP:journals/tsmc/SugiyamaTT81}.

Exact solutions of OSCM have been searched with branch-and-cut techniques, see e.g.~\cite{DBLP:conf/gd/JungerM95,DBLP:conf/isaac/MutzelW98,valls1996branch}, and with FPT algorithms. The parameterized version of the problem, with respect to its natural parameter $k = \min_{\pi_V} cr(G,\pi_U,\pi_V)$,  has been widely investigated. Dujmovic et al.~\cite{DBLP:conf/gd/DujmovicW02,DBLP:journals/algorithmica/DujmovicW04} were the first to show that OSCM can be solved in $f(k)n^{O(1)}$ time, with $f \in O(\psi^k)$, where $\psi \approx 1.6182$ is the golden ratio. Subsequently, Dujmovic and Whitesides~\cite{DBLP:conf/gd/DujmovicFK03,DBLP:journals/jda/DujmovicFK08} improved the running time to $\bigO(1.4656^k+kn^2)$. Fernau et al.~\cite{DBLP:conf/iwoca/FernauFLMPS10}, exploiting a reduction to weighted FAST and the algorithm by Alon et al.~\cite{DBLP:conf/icalp/AlonLS09}, gave a subexponential parameterized algorithm with running time $2^{\bigO(\sqrt{k}\log k)}+n^{\bigO(1)}$. The reduction also gives a PTAS using~\cite{DBLP:conf/stoc/Kenyon-MathieuS07}. Kobayashi and Tamaki~\cite{DBLP:journals/algorithmica/KobayashiT15} gave the current best FPT result with running time $\bigO(k2^{\sqrt{2k}} +n)$.

{\em Quantum Graph Drawing} has recently gained popularity. Caroppo et al.~\cite{CaroppoLB24} applied Grover's search~\cite{DBLP:conf/stoc/Grover96} to several Graph Drawing problems obtaining a quadratic speedup over classical exhaustive search. Fukuzawa et al.~\cite{Fukuzawa2023} studied how to apply quantum techniques for solving systems of linear equations~\cite{DBLP:reference/algo/Harrow16} to Tutte's algorithm for drawing planar $3$-connected graphs~\cite{tutte1963draw}.
Recently, in a paper that pioneered {\em Quantum Dynamic Programming}, several vertex ordering problems related to Graph Drawing have been tackled by Ambainis et al.~\cite{DBLP:conf/soda/AmbainisBIKPV19}.

\subparagraph{Our Results.}

First, we exploit the quantum dynamic programming framework of Ambainis et al.\ to devise an algorithm that solves  {OSCM} in $\bigO^*(1.728^n)$ time and space.
We compare the performance of our algorithm against the algorithm proposed in~\cite{DBLP:journals/algorithmica/KobayashiT15}, based on the value of $k$. We have that the quantum algorithm performs asymptotically better than the FPT algorithm, when~$k \in \Omega(n^2)$.
Second, we use quantum divide and conquer to obtain an algorithm that solves OSCM using $\bigO^*(2^n)$ time and~polynomial~space.
Both our algorithms improve the corresponding classical bounds in either time or space or both.

In our first result, we adopt the QRAM (quantum random access memory) model of computation~\cite{PhysRevLett.100.160501}, which allows (i) accessing quantum memory in superposition and (ii) invoking any $T$-time classical algorithm that uses a (classic) random access memory as a subroutine spending time $\bigO(T)$.
In the second result we do not use the QRAM model of computation since we do not need to explicitly store the results obtained in~partial~computations.

\section{Preliminaries}

We assume familiarity with basic notions in the context of graph drawing~\cite{BattistaETT99}, graph theory~\cite{Diestelbook}, and quantum computation~\cite{DBLP:books/daglib/0046438}. 

\subparagraph{Notation.} For ease of notation, given positive integers $a$ and $b$, we denote $\myceil{\frac{a}{b}}$ as $\frac{a}{b}$ and $\myceil{\log a}$ as $\log a$. 
If $f(n)=\bigO(n^c)$ for some constant $c$, we will write $f(n)=poly(n)$. In case $f(n)=d^n poly(n)$ for some constant $d$, we use the notation $f(n)=\bigO^*(d^n)$ (see, e.g.,~\cite{DBLP:journals/dam/Woeginger08}). 

\subparagraph{Quantum Tools.} The QRAM model of computation enables us to use quantum search primitives that involve condition checking on data stored in random access memory. Specifically, the QRAM may be used by an oracle to check conditions based on the data stored in memory, marking the superposition states that correspond to feasible or optimal solutions.

We will widely exploit the following.

\begin{theorem}[Quantum Minimum Finding, QMF~\cite{DBLP:journals/corr/quant-ph-9607014}]\label{th:quantum-min-find}
    Let $ f: D \rightarrow C $ be a polynomial-time computable function, whose domain $D$ has size $N$ and whose codomain $C$ is a totally ordered set (such as $\mathbb{N}$) and let $\mathcal F$ be a procedure that computes $f$. There exists a bounded-error quantum algorithm that finds 
    $ x \in D $ such that $f(x)$ is minimized 
    using $\bigO(\sqrt{N})$ applications~of~$\mathcal F$.
\end{theorem}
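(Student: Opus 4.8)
The plan is to realize the claimed algorithm as the minimum-finding procedure of D\"urr and H\o yer, which reduces the task to iterated Grover search over a shrinking ``better-than-current'' set. The first ingredient I would set up is Grover search with an unknown number of marked items --- the exponential-searching routine of Boyer, Brassard, H\o yer and Tapp: given a unitary that marks a subset $M\subseteq D$, it returns a uniformly random element of $M$ using an expected $\bigO(\sqrt{N/|M|})$ applications of that unitary whenever $M\neq\emptyset$, and it never returns a wrong element, since it verifies each measured candidate and retries --- hence it is Las Vegas. We will run it with the marking function $g_y(x)\coloneqq[\,f(x)<f(y)\,]$ for a current ``threshold'' element $y\in D$; since $\mathcal F$ is a polynomial-time classical procedure it can be made reversible, so $g_y$ is implemented by a unitary costing $\bigO(1)$ applications of $\mathcal F$ (compute $f(x)$, compare with the stored value $f(y)$, phase-flip, uncompute).

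The algorithm is then: sample $y\in D$ uniformly at random; repeatedly run the exponential-searching routine on $g_y$ to obtain some $x$ with $f(x)<f(y)$ and set $y\leftarrow x$; stop once a prescribed global budget of $\bigO(\sqrt N)$ applications of $\mathcal F$ has been consumed, and output the current $y$. If the budget does not run out, correctness is immediate, because the loop can only stop when $g_y$ marks nothing, i.e.\ when $y$ minimizes $f$ over $D$. The heart of the argument is the cost analysis. Sort $D$ as $x_1,\dots,x_N$ with $f(x_1)\le\dots\le f(x_N)$ (breaking ties arbitrarily) and follow the rank of the successive thresholds: a strictly decreasing random sequence $R_0>R_1>\dots$ ending at $1$, with $R_0$ uniform on $\{1,\dots,N\}$ and, since the routine returns a uniformly random marked element, $R_{j+1}$ uniform on $\{1,\dots,R_j-1\}$ given $R_j$ (ties only let it jump lower, which only helps). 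A short induction on this chain shows that for every $r$ the probability that rank $r$ is ever visited is at most $1/r$. A step taken from a rank-$r$ threshold has a marked set of size $r-1$, so its expected cost is $\bigO(\sqrt{N/(r-1)})$ applications of $\mathcal F$; hence the expected number of applications of $\mathcal F$ before the threshold reaches rank $1$ is at most $\bigO\!\bigl(\sqrt N\sum_{r=2}^{N}\tfrac{1}{r\sqrt{r-1}}\bigr)=\bigO(\sqrt N)$.

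To turn this expectation into the stated worst-case bound I would apply Markov's inequality: with the global budget set to a large constant times this expectation, the threshold reaches a minimizer within budget --- so the output is correct --- with probability at least $1/2$. Because the exponential-searching subroutine is Las Vegas, it contributes no further error, and the only failure mode is the budget expiring first. A final $\bigO(1)$ independent repetitions, keeping the candidate of smallest $f$-value found, boost the success probability above any desired constant, giving a bounded-error algorithm. Each invocation of the marking unitary, and each verification of a measured candidate, uses $\bigO(1)$ applications of $\mathcal F$, so the total is $\bigO(\sqrt N)$ applications of $\mathcal F$, as claimed.

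I expect the cost analysis of the second paragraph to be the delicate step: one has to reconcile the fact that a single search against only $r-1$ marked elements with $r$ small already costs about $\sqrt N$ applications of $\mathcal F$ with the requirement that the \emph{entire} cascade of threshold updates still costs $\bigO(\sqrt N)$ in expectation. This balance hinges precisely on the $1/r$ bound for the probability of ever visiting rank $r$, combined carefully with the $\bigO(\sqrt{N/(r-1)})$ expected cost of the Boyer--Brassard--H\o yer--Tapp subroutine.
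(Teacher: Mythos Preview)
The paper does not prove this theorem at all: it is stated as a black-box tool, with a citation to D\"urr and H\o yer, and no argument is given. So there is no ``paper's own proof'' to compare against. Your proposal correctly reconstructs the D\"urr--H\o yer argument (iterated BBHT search with a shrinking threshold, the $1/r$ visiting-probability bound, the $\sum_r \tfrac{1}{r\sqrt{r-1}}$ summation, and the Markov/truncation step to pass from expected to worst-case cost), and is sound as a proof sketch of the cited result.
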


\section[Quantum Dynamic Programming for OSCM]{Quantum Dynamic Programming for One-Sided Crossing Minimization}

In this section, we first describe the quantum dynamic programming framework of Ambainis et al.~\cite{DBLP:conf/soda/AmbainisBIKPV19}, which is applicable to numerous optimization problems involving sets. Then, we show that OSCM is a set problem over $V$ that falls within this framework. We use this fact to derive a quantum algorithm (\cref{th:quantum-exp-time-space}) exhibiting a speedup over the corresponding classical singly-exponential algorithm ~(\cref{th:classical-exp-time-space}) in both time and space complexity. %

\subparagraph{Quantum dynamic programming for set problems.}\label{sse:QDP}
Ambainis et al.~\cite{DBLP:conf/soda/AmbainisBIKPV19} introduced a quantum framework designed to speedup some classical exponential-time and space {\em dynamic programming algorithms}. 
Specifically, the structure of the amenable problems for such a speedup must allow determining the solution for a set $X$ by considering optimal solutions for all partitions $(S,X\setminus S)$ of $X$ with $|S|=k$, for any fixed positive $k$, using polynomial time {\em for each partition}. 
This framework is defined by the following lemma derivable from~\cite{DBLP:conf/soda/AmbainisBIKPV19}. %

\begin{lemma}\label{lem:quantum-dp-lemma}
Let $\mathcal P$ be an optimization problem (say a {\em \em minimization} problem) over a set $X$. Let $|X| = n$ and let $OPT_{\mathcal{P}}(X)$ be the optimal value for $\mathcal P$ over $X$. 
Suppose that there exists a polynomial-time computable function $f_{\mathcal P}: 2^X \times 2^X \rightarrow \mathbb{R}$ such that,
for any $S \subseteq X$, it holds that for any $k \in [|S|-1]$:

\begin{equation}\label{eq:dynamic-recurrence}
OPT_{\mathcal{P}}(S)=\min_{W\subset S, |W|= k} \{OPT_{\mathcal{P}}(W)+OPT_{\mathcal{P}}(S\setminus W)+
f_{\mathcal P}(W,S \setminus W)\}
\end{equation}

\noindent
Then, $OPT_{\mathcal{P}}(X)$ can be computed by a quantum algorithm that uses QRAM in $\bigO^*(1.728^n)$ time and space.
\end{lemma}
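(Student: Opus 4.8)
The plan is to instantiate the quantum dynamic-programming scheme of Ambainis et al.~\cite{DBLP:conf/soda/AmbainisBIKPV19} in the abstract setting of the lemma, pairing a classical precomputation phase with a phase of three nested rounds of Quantum Minimum Finding (\cref{th:quantum-min-find}). Fix a constant $\beta \in (\tfrac18,\tfrac14)$, to be pinned down at the end, and let $H(\cdot)$ denote the binary entropy function. In the classical phase, I would compute $OPT_{\mathcal{P}}(S)$ for every $S \subseteq X$ with $|S| \le \beta n$ and store all these values in QRAM, indexed by $S$. This is done bottom-up: the values for $\emptyset$ and for singletons are base cases, and for $|S| \ge 2$ one applies \cref{eq:dynamic-recurrence} with $k = 1$, which takes $poly(n)$ time per set once the table for all smaller sets is available. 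Since $\beta < \tfrac12$, this phase runs in $\bigO^*\bigl(\sum_{i \le \beta n}\binom{n}{i}\bigr) = \bigO^*\bigl(\binom{n}{\beta n}\bigr) = \bigO^*(2^{H(\beta) n})$ time and uses that much QRAM.

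In the quantum phase I would evaluate $OPT_{\mathcal{P}}(X)$ by a depth-three recursion of \cref{th:quantum-min-find}. The outer round minimizes, over all $W \subset X$ with $|W| = \tfrac{n}{2}$, the value $OPT_{\mathcal{P}}(W) + OPT_{\mathcal{P}}(X \setminus W) + f_{\mathcal{P}}(W, X \setminus W)$, which by \cref{eq:dynamic-recurrence} with $k = \tfrac{n}{2}$ equals $OPT_{\mathcal{P}}(X)$. Evaluating the objective for a fixed $W$ requires $OPT_{\mathcal{P}}$ on two sets of size $\tfrac{n}{2}$, each produced by a second-level run of \cref{th:quantum-min-find}: for $|S| = \tfrac{n}{2}$ it minimizes over $W' \subset S$ with $|W'| = \tfrac{n}{4}$ the expression given by \cref{eq:dynamic-recurrence}. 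This in turn needs $OPT_{\mathcal{P}}$ on sets of size $\tfrac{n}{4}$, supplied by a third-level run: for $|S| = \tfrac{n}{4}$ it minimizes over $W'' \subset S$ with $|W''| = \beta n$; here both $W''$ and $S \setminus W''$ have size at most $\beta n$ (using $\beta \ge \tfrac18$), so their $OPT_{\mathcal{P}}$-values are read directly from QRAM and $f_{\mathcal{P}}(W'', S \setminus W'')$ is computed directly, all in $poly(n)$ time. Unwinding, the outer round makes $\bigO(\sqrt{\binom{n}{n/2}})$ queries, each spawning two second-level runs of $\bigO(\sqrt{\binom{n/2}{n/4}})$ queries, each of those spawning two third-level runs of $\bigO(\sqrt{\binom{n/4}{\beta n}})$ queries of cost $poly(n)$; since the branching factor $2$ occurs a constant number of times, the quantum phase runs in $\bigO^*\bigl(\sqrt{\binom{n}{n/2}}\cdot\sqrt{\binom{n/2}{n/4}}\cdot\sqrt{\binom{n/4}{\beta n}}\bigr) = \bigO^*\bigl(2^{(3/4 + H(4\beta)/8)\,n}\bigr)$ time, using $\binom{m}{m/2} = \bigO^*(2^m)$ for the first two factors and $\binom{n/4}{\beta n} = \bigO^*(2^{H(4\beta)\,n/4})$ for the last. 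It remains to fix $\beta$: on $(\tfrac18,\tfrac14)$ the classical exponent $H(\beta)$ increases while the quantum exponent $\tfrac34 + \tfrac18 H(4\beta)$ decreases (as then $4\beta \in (\tfrac12,1)$), so their maximum is minimized at the unique $\beta^{\star}$ with $H(\beta^{\star}) = \tfrac34 + \tfrac18 H(4\beta^{\star})$; a direct computation gives $\beta^{\star} \approx 0.235$ and common exponent $\approx \log_2 1.728$, so with this choice both the classical and the quantum phase — and the QRAM footprint — are $\bigO^*(1.728^n)$, as claimed.

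I expect the only genuinely delicate point to be the composition of the nested minimum-finding routines: \cref{th:quantum-min-find} is a bounded-error procedure, so before an inner run may serve as the comparison oracle of the round enclosing it, its failure probability must be driven down to $o(1/\sqrt{N})$ for $N$ the size of the enclosing domain — by $\bigO(\log n)$-fold repetition, or through the standard variant of quantum search tolerant of a faulty oracle — and its workspace must be uncomputed so the subroutine is reversible. Because the recursion has a fixed depth of three, this costs only polylogarithmic overhead and does not affect the $\bigO^*(1.728^n)$ bound. The remaining ingredients are routine bookkeeping: one checks that every intermediate $OPT_{\mathcal{P}}(S)$ and every value $f_{\mathcal{P}}(W, S \setminus W)$ is representable in $poly(n)$ bits, so that a QRAM read, a comparison, and an evaluation of $f_{\mathcal{P}}$ inside the innermost oracle each take $poly(n)$ time, and that $n$ is large enough for the prescribed set sizes to make sense, the finitely many small cases being absorbed into the $\bigO^*$ notation.
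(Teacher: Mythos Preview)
Your proof is correct and follows essentially the same approach as the paper's: a classical precomputation phase up to a threshold size followed by three nested rounds of QMF, balanced via a single parameter (your $\beta$ is exactly the paper's $(1-\alpha)/4$, so the two parameterizations coincide under $\alpha = 1-4\beta$, with $\alpha\approx 0.055$ matching your $\beta^\star\approx 0.236$). Your additional remarks on amplifying the inner QMF success probability and uncomputing workspace are technically necessary points that the paper glosses over, so in this respect your write-up is slightly more careful.
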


\begin{proof}
The algorithm for the proof of the lemma is presented as~\cref{algo:quantum-dp-over-sets}. The main idea of the algorithm is to precompute solutions for smaller subsets using classical dynamic programming and then recombine the results of the precomputation step to obtain the optimal solution for the whole set (recursively) applying QMF (see~\cref{th:quantum-min-find}).
However, to achieve a speedup over the classical dynamic programming algorithm stemming from~\cref{eq:dynamic-recurrence} with $k=|S|-1$, whose time complexity is $\mathcal{O}^*(2^n)$, it must hold that the time complexity of the classical part of the algorithm (L3--L7) and the time complexity of QMF over all subsets must be balanced (L15).

\begin{algorithm}[tb!]
\begin{algorithmic}[1]
\Procedure{$\texttt{QuantumDP}$}{$X$}
\State \textbf{Input}: Set $X$ of size $n$; \textbf{Output}: the value $OPT_{\mathcal{P}}(X)$.
\For{all sets $W \subset X$ such that $|W| \leq (1-\alpha) n/4$} \Comment{in order of increasing size}
\State Compute $OPT_{\mathcal{P}}(W)$ classically via dynamic programming 
\Comment{use~\cref{eq:dynamic-recurrence}\\ \ \ \ \ \ \ \ \ \ \ \ \ \ \ \ \ \ \ \ \ \ \ \ \ \ \ \ \ \ \ \ \ \ \ \ \ \ \ \ \ \ \ \ \ \ \ \ \ \ \ \ \ \ \ \ \ \ \ \ \ \ \ \ \ \ \ \ \ \ \ \ \ \ \ \ \ \ \ \ \ \ \ \ \ \ \ \ \ \ \ \ \ with $k=|W|-1$}
\State Store $OPT_{\mathcal{P}}(W)$ in QRAM
\EndFor
\State \Return $\texttt{OPT}$(X)
\EndProcedure

\Procedure{$\texttt{OPT}$}{$S$}
\State \textbf{Input}: Subset $S \subseteq X$; \textbf{Output}: the value $OPT_{\mathcal{P}}(S)$.
\If{$|S| \leq (1-\alpha)n/4$}
\State \Return value $OPT_{\mathcal{P}}(S)$ stored in QRAM
\Else
\State \Return the result of $\texttt{QMF}$ over all $S \subset X$ %
to find
$$\min_{W\subset S, |W|= \frac{|S|}{2}} \{\texttt{OPT}(W)+\texttt{OPT}(S\setminus W)+ f_{\mathcal P}(W,S \setminus W)\}$$
\EndIf

\EndProcedure
\end{algorithmic}
\caption{Procedure $\texttt{QuantumDP}$ is the algorithm of~\cref{lem:quantum-dp-lemma}.  
Procedure $\texttt{OPT}$ is a recursive procedure invoked by $\texttt{QuantumDP}$.
Procedure $\texttt{QMF}$ performs quantum minimum finding.}
\label{algo:quantum-dp-over-sets}
\end{algorithm}
Specifically,~\cref{algo:quantum-dp-over-sets} works as follows.
Let $\alpha \in (0, 0.5]$ be a parameter. First, it precomputes and stores in QRAM a table containing the solutions for all $W \subset X$ with $|W| \leq (1-\alpha)\frac{n}{4}$ using classic dynamic programming. 
Subsequently, it recursively applies QMF as follows (see Procedure \texttt{OPT}). To obtain the value $OPT_{\mathcal P}(X)$, the {\em first level of recursion} performs QMF over all subsets $S \subset X$ of size $\myfloor{\frac{n}{2}}$ and $\frac{n}{2}$. 
To obtain the value $OPT_{\mathcal P}(S)$ for each of such sets, the {\em second level of recursion} performs QMF over all subsets $Y \subset S$ of size $\myfloor{\frac{n}{4}}$ and $\frac{n}{4}$. 
Similarly, to obtain the value $OPT_{\mathcal P}(Y)$ for each of such sets, the {\em third level of recursion} performs QMF over all subsets $W \subset Y$ of size $\myfloor{\frac{\alpha n}{4}}$ and $\frac{\alpha n}{4}$.
Finally, for any subset of these sizes, the values $OPT_{\mathcal P}(W)$ and $OPT_{\mathcal P}(Y \setminus W)$ can be directly accessed as it is stored in QRAM.

In the following, $H: [0,1] \rightarrow [0,1] $ denotes the {\em binary entropy function}, where $H(p)=-p\log_2(p)-(1-p)\log_2(1-p)$~\cite{DBLP:books/daglib/0013517}.
The overall complexity of~\cref{algo:quantum-dp-over-sets} is as follows:

\begin{itemize}
\item classical pre-processing takes
$\bigO^* \left(\binom{n}{\leq(1-\alpha)\frac{n}{4}}\right) = \bigO^*\left(2^{H\left(\frac{1-\alpha}{4}\right)n}\right)$ time;
\item the quantum part takes 
$ \bigO^*\left(\sqrt{\binom{n}{\frac{n}{2}}\binom{\frac{n}{2}}{\frac{n}{4}}\binom{n}{\frac{\alpha n}{4}}}\right) = \bigO^*\left(2^{\frac{1}{2}\left(1+\frac{1}{2}+\frac{H(\alpha)}{4}\right)n}\right) $ time.
\end{itemize}

The optimal choice for $\alpha$ to balance the classical and quantum parts is approximately $0.055362$, and the resulting space an time complexity of~\cref{algo:quantum-dp-over-sets} are both $\bigO^*(1.728^n)$.
\end{proof}

\subparagraph{Quantum dynamic programming for OSCM.}
In the following, let $(G,\pi_U)$ be an instance of OSCM. %
We start by introducing some notation and definitions.
Let $S$ be a subset of $E$ and let $H=(U,V,S)$ be the subgraph of $G$ whose vertices are those of $G$ and whose edges are those in $S$. For ease of notation, we denote $cr(H,\pi_U,\pi_V)$ simply as $cr_S(\pi_U,\pi_V)$. 
Also, let $\pi_V$ be a linear ordering of the vertices in $V$ and $V_1, V_2 \subseteq V$ be two subsets of the vertices of $V$ such that $V_1 \cap V_2 = \emptyset$. %
We say that $V_1$ \emph{precedes} $V_2$ in $\pi_V$, denoted as $V_1 \prec_{\pi_V} V_2$, if for any $v_1 \in V_1$ and $v_2 \in V_2$, it holds that $\pi_V(v_1) < \pi_V(v_2)$. 
Also, for a any $W \subseteq V$, we denote by $E(W)$ the subset of $E$ defined a follows $E(W):=\{(u_a,v_b): (u_a,v_b) \in E \wedge v_b \in W\}$. 

\smallskip

We will exploit the following useful lemma.

\begin{lemma}\label{lem:separability}
    Let $G=(U,V,E)$ be a bipartite graph and let $\pi_U : U \leftrightarrow [n_U]$ be a linear ordering of the vertices of $U$. 
    Also, let $V_1, V_2 \subseteq V$ be two subsets of the vertices of $V$ such that %
    $V_1 \cap V_2 = \emptyset$. 
    Then, there exists a constant $\gamma(\pi_U,V_1,V_2)$ such that, for 
    every linear ordering $\pi_V : V \leftrightarrow [n_V]$ with $V_1 \prec_{\pi_V} V_2$ we have that: 
    \begin{equation}\label{eq:crossing-separability}
    \hfil
    \gamma(\pi_U,V_1,V_2) = cr_{E(V_1) \cup E(V_2)}(\pi_U,\pi_V) - cr_{E(V_1)}(\pi_U,\pi_V)-cr_{E(V_2)}(\pi_U,\pi_V)
    \end{equation}
\end{lemma}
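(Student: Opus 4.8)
The plan is to show that the quantity on the right-hand side of~\eqref{eq:crossing-separability} does not depend on the particular choice of $\pi_V$ (as long as $V_1 \prec_{\pi_V} V_2$), which immediately yields the existence of the claimed constant $\gamma(\pi_U,V_1,V_2)$. The key observation is that $E(V_1) \cup E(V_2) = E(V_1 \cup V_2)$ and that $E(V_1)$ and $E(V_2)$ are disjoint (since $V_1 \cap V_2 = \emptyset$), so every unordered pair of edges in $E(V_1 \cup V_2)$ falls into exactly one of three categories: both endpoints in $E(V_1)$, both in $E(V_2)$, or one in each. Consequently, for \emph{any} fixed $\pi_V$ we have the exact identity
\begin{equation*}
cr_{E(V_1) \cup E(V_2)}(\pi_U,\pi_V) = cr_{E(V_1)}(\pi_U,\pi_V) + cr_{E(V_2)}(\pi_U,\pi_V) + X(\pi_U,\pi_V),
\end{equation*}
where $X(\pi_U,\pi_V)$ counts exactly the crossing pairs $\{(u_a,v_b),(u_c,v_d)\}$ with $v_b \in V_1$ and $v_d \in V_2$ (a ``mixed'' pair). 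Thus the right-hand side of~\eqref{eq:crossing-separability} equals $X(\pi_U,\pi_V)$, and it remains to prove that $X(\pi_U,\pi_V)$ is the same for all $\pi_V$ satisfying $V_1 \prec_{\pi_V} V_2$.

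First I would fix such a pair of edges $e = (u_a, v_b)$ with $v_b \in V_1$ and $e' = (u_c, v_d)$ with $v_d \in V_2$, and analyze when they cross. By definition, $e$ and $e'$ cross iff $u_a \neq u_c$, $v_b \neq v_d$, and the pairs $(\pi_U(u_a),\pi_U(u_c))$ and $(\pi_V(v_b),\pi_V(v_d))$ are in opposite order. Since $v_b \in V_1$, $v_d \in V_2$, and $V_1 \prec_{\pi_V} V_2$, we always have $\pi_V(v_b) < \pi_V(v_d)$, regardless of which admissible $\pi_V$ we chose. Also $v_b \neq v_d$ automatically, as $V_1 \cap V_2 = \emptyset$. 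Hence the crossing condition reduces to ``$u_a \neq u_c$ and $\pi_U(u_c) < \pi_U(u_a)$'', i.e., $\pi_U(u_c) < \pi_U(u_a)$, which depends only on $\pi_U$ and on the identities of the endpoints in $U$ — not on $\pi_V$ at all. Therefore the set of mixed crossing pairs, and in particular its cardinality $X(\pi_U,\pi_V)$, is determined entirely by $\pi_U$, $V_1$, and $V_2$. Defining $\gamma(\pi_U,V_1,V_2)$ to be this common value completes the proof; concretely, $\gamma(\pi_U,V_1,V_2) = \big|\{\{(u_a,v_b),(u_c,v_d)\} : v_b\in V_1,\ v_d\in V_2,\ (u_a,v_b),(u_c,v_d)\in E,\ \pi_U(u_c)<\pi_U(u_a)\}\big|$.

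I do not anticipate a serious obstacle here; the statement is essentially a bookkeeping fact about how crossings decompose when one vertex block is pushed entirely to the left of another. The only point requiring a little care is the case analysis in the definition of a crossing — making sure that the condition ``$v_1 \neq v_2$'' is automatic from disjointness and that the ordering of the $V$-endpoints is forced, so that the $\pi_V$-dependence genuinely drops out — and handling the bookkeeping of unordered versus ordered pairs consistently so that nothing is double-counted. Once the three-way partition of edge pairs is written down cleanly, the argument is immediate.
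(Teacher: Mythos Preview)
Your proposal is correct and follows essentially the same approach as the paper: both arguments observe that the right-hand side of~\eqref{eq:crossing-separability} counts exactly the ``mixed'' crossing pairs (one edge in $E(V_1)$, one in $E(V_2)$), and then note that for any such pair the assumption $V_1 \prec_{\pi_V} V_2$ forces $\pi_V(v_b) < \pi_V(v_d)$, so the crossing condition reduces to the $\pi_V$-independent inequality $\pi_U(u_c) < \pi_U(u_a)$. Your version is a bit more explicit (you spell out the three-way partition of edge pairs and give a closed-form expression for $\gamma$), but the underlying idea is identical.
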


\begin{proof} 
    First observe that the right side of~\cref{eq:crossing-separability} consists of three terms. The terms $cr_{E(V_1) \cup E(V_2)}(\pi_U,\pi_V)$, $cr_{E(V_1)}(\pi_U,\pi_V)$, and $cr_{E(V_2)}(\pi_U,\pi_V)$ denote
    the number of crossings in $(\pi_U,\pi_V)$ determined by (i) edges in $E(V_1) \cup E(V_2)$, (ii) edges in $E(V_1)$, and (iii) edges in $E(V_2)$, respectively. Therefore, the quantity (i) - (ii) - (iii) represents the number of crossings in $(\pi_U,\pi_V)$ determined by pair of edges such that one edge has an endpoint in $V_1$ and the other edge has an endpoint in $V_2$. 
    Consider two distinct linear orderings $\pi'_V$ and $\pi''_V$ of $V$
    such that $V_1$ precedes $V_2$ in both $\pi'$ and $\pi''$. 
    Consider two edges $e_1 = (u_1,v_1)$, with $v_1\in V_1$, and $e_2 = (u_2,v_2)$, with $v_2 \in V_2$.
    Since $V_1$ precedes $V_2$ in both $\pi'_V$ and $\pi''_V$, we have that $e_1$ crosses $e_2$ in $(\pi_U,\pi'_V)$ and $(\pi_U,\pi''_V)$ only if $\pi_U(u_2) < \pi_U(u_1)$. 
    Therefore, the quantity (i) - (ii) - (iii) determined by $(\pi_U,\pi'_V)$ 
    and $(\pi_U,\pi''_V)$
    does not depend on the specific ordering of the nodes in $V_1$ and $V_2$, but only on the relative position of the sets $V_1$ and $V_2$ within $\pi'_V$ and $\pi''_V$, which is the same in both orders, by hypothesis.
\end{proof}

Observe that, given an ordering $\pi_V$ of $V$ such that $V_1$ precedes $V_2$ in $\pi_V$, the value $\gamma(\pi_U,V_1,V_2)$ represents the number of crossings in a $2$-level drawing $(\pi_U,\pi_V)$ of $G$ determined by pairs of edges, one belonging to $E(V_1)$ and the other belonging to $E(V_2)$.

\smallskip
We are now ready to derive our dynamic programming quantum algorithm for OSCM.
We start by showing that the framework of~\cref{lem:quantum-dp-lemma} can be applied to the optimization problem corresponding to OSCM (i.e., computing the minimum number of crossings over all 2-level drawings $(\pi_U,\pi_V)$ of $G$ with $\pi_U$ fixed). We call this problem {\sc MinOSCM}.

\begin{itemize}
\item First, we argue that {\sc MinOSCM} 
is a set problem over $V$, whose optimal solution respects a recurrence of the same form as~\cref{eq:dynamic-recurrence} of~\cref{lem:quantum-dp-lemma}.
In fact, for a subset $S$ of $V$, let $OPT(S)$ denote the minimum number of crossings in a 2-level drawing $(\pi_U,\pi_S)$ of the graph $G_S = (U,S, E(S))$, where $\pi_S: S \leftrightarrow [|S|]$ is a linear ordering of the vertices of $S$. 
Then, by~\cref{lem:separability}, we can compute $OPT(S)$ by means of the following recurrence for any $k\in[|S|-1]$:

\begin{equation}\label{eq:OSCM-recurrence}
OPT(S)=\min_{W\subset S, |W|= k} \{OPT(W)+OPT(S\setminus W)+
\gamma(\pi_U,W,S \setminus W)\}
\end{equation}

\noindent Clearly, $OPT(V)$ corresponds to the optimal solution for $(G,\pi_U)$. 
Moreover, function $\gamma$ plays the role of function $f_{\mathcal{P}}$ of~\cref{lem:quantum-dp-lemma}. 
\item Second, we have that $\gamma$ can be computed in $poly(n)$ time.
\end{itemize}

Next, we show that~\cref{algo:quantum-dp-over-sets} applied to {\sc MinOSCM} can also be adapted to return an ordering $\pi_V$ of $V$ that yields a drawing with the minimum number of crossings, i.e., a solution for OSCM.
To obtain the optimal ordering $\pi_V$ of $V$, we modify~\cref{algo:quantum-dp-over-sets} as follows. 
We assume that the entries of the dynamic programming table $T$, computed in the preprocessing step of~\cref{algo:quantum-dp-over-sets}, are indexed by subsets of $V$.
When computing the table $T$, for each subset $W\subset V$ with $|W|\leq(1-\alpha)\frac{n}{4}$, together with the value $OPT(W) = \min_{R\subset W, |R|= |W|-1} \{OPT(R)+OPT(W\setminus R)+
f(R,W \setminus R)\}$, we also store a linear ordering $L[W]$ of $W$ such that $cr_{E(W)}(\pi_U,\pi_V) = OPT(W)$. 
Observe that, an optimal ordering of $V$ that achieves $OPT(V)$ is obtained by concatenating an optimal ordering of a subset $S$ of $V$, with $|S| = n/2$, with an optimal ordering of the subset $V \setminus S$, where $S$ is the subset that achieves the minimum value of~\cref{eq:dynamic-recurrence} (where $V$ is the set whose optimal value we seek to compute and $k = n/2$).
Similarly, an optimal ordering for a set $S$ with $|S| = n/2$ is obtained by concatenating an optimal ordering of a subset $Y$ of $S$, with $|Y| = n/4$, with an optimal ordering of the subset $S \setminus Y$, where $Y$ is the subset that achieves the optimal value of~\cref{eq:dynamic-recurrence} (where $S$ is the set whose optimal value we seek to compute and $k = n/4$).
Finally, an optimal ordering for a set $Y$ with $|Y| = n/4$ is obtained by concatenating an optimal ordering of a subset $W$ of $Y$, with $|W| =\alpha\frac{n}{4}$, with an optimal ordering of the subset $Y \setminus W$, where $W$ is the subset that achieves the optimal value of~\cref{eq:dynamic-recurrence} (where $Y$ is the set whose optimal value we seek to compute and $k = \alpha n/4$).
It follows that, an optimal ordering of $V$ that achieves $OPT(V)$  
consists of the concatenation of linear orderings of sets $W_1,W_2,\dots,W_8$, with $|W_i| \leq (1-\alpha)n/4$. We thus modify the procedures $\texttt{QuantumDP}$ and $\texttt{OPT}$  of~\cref{algo:quantum-dp-over-sets} to additionally return such sets. Since the optimal orderings for sets of size bounded by $(1-\alpha)n/4$ are also now stored in $T$. We obtain an optimal ordering
 of $V$ by concatenating the linear orders $L[W_1],L[W_2],\dots,L[W_8]$.

Altogether, we have finally proved the following.

\begin{theorem}\label{th:quantum-exp-time-space}
    There is a bounded-error quantum algorithm that solves OSCM in $O^*(1.728^{n_V})$  time and space.
\end{theorem}

Observe that~\cref{eq:OSCM-recurrence} can also be used to derive an exact classical algorithm for OSCM by processing the subsets of $V$ in order of increasing size. In particular, if $|S| \in O(1)$, then $OPT(S)$ can easily be computed in $poly(n)$ time. Otherwise, by using~\cref{eq:OSCM-recurrence} with $k=|S|-1$, we have that $OPT(S)$ can be computed in $\bigO(|S| poly(n))$ time. Since there exist at most $2^{n_V}$ sets $S\subseteq U$ and since $|S|\leq n_V$, we have the following. 

\begin{theorem}\label{th:classical-exp-time-space}
 There is a classical algorithm that solves OSCM in $\bigO^*(2^{n_V})$ time and space.
\end{theorem}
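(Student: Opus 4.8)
The plan is to turn recurrence~\cref{eq:OSCM-recurrence} into a straightforward bottom-up dynamic program over the subset lattice of $V$, and then bound both the time and the space by $\bigO^*(2^{n_V})$. Concretely, I would maintain a table $T$ indexed by the subsets $S \subseteq V$; there are exactly $2^{n_V}$ such subsets, which already accounts for the claimed space bound (each entry stores a single number $OPT(S)$, of $poly(n)$ bits since the crossing number is at most $\binom{|E|}{2} \le n^4$). The table is filled in order of non-decreasing $|S|$, so that whenever we process $S$ all strictly smaller subsets have already been computed.

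For the time bound, the key observation is that we do \emph{not} need the full recurrence over all $k$ from~\cref{lem:separability}: it suffices to use it with the single choice $k = |S| - 1$. With this choice the set $W$ ranges over the $|S|$ subsets of $S$ of size $|S|-1$, i.e. over the sets $S \setminus \{v\}$ for $v \in S$, and $S \setminus W = \{v\}$ is a singleton. For each such $v$ we look up $OPT(S \setminus \{v\})$ and $OPT(\{v\})$ in $T$ and evaluate $\gamma(\pi_U, S\setminus\{v\}, \{v\})$, which costs $poly(n)$ time by the second bullet point preceding~\cref{th:quantum-exp-time-space} (indeed $\gamma$ here just counts, for each edge incident to $v$, the edges of $E(S\setminus\{v\})$ that must cross it, which is a simple $poly(n)$ computation). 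Hence computing a single entry $OPT(S)$ costs $\bigO(|S| \cdot poly(n)) = \bigO(n_V \cdot poly(n)) = poly(n)$ time, and the base case $|S| \in \bigO(1)$ is handled directly in $poly(n)$ time by brute force over the $\bigO(1)!$ orderings. Summing over all $2^{n_V}$ subsets gives total running time $2^{n_V} \cdot poly(n) = \bigO^*(2^{n_V})$. Finally, $OPT(V)$ is the desired minimum crossing number, and an actual optimal ordering $\pi_V$ is recovered by the same bookkeeping used in the quantum algorithm: alongside $OPT(S)$ store the vertex $v$ achieving the minimum together with a pointer, and unwind these pointers from $V$ downward to read off the ordering (equivalently, store $L[S]$ as in the discussion preceding~\cref{th:quantum-exp-time-space}), which at most doubles the space.

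There is essentially no deep obstacle here; the statement is the ``trivial'' classical baseline that the paper's quantum results are meant to beat, so the proof is a matter of carefully checking that (i) recurrence~\cref{eq:OSCM-recurrence} with $k = |S|-1$ is correct --- which is immediate from~\cref{lem:separability}, since $\gamma(\pi_U, W, S\setminus W)$ is precisely the number of $E(W)$--$E(S\setminus W)$ crossings forced by placing all of $W$ before $S\setminus W$, and every ordering of $S$ splits uniquely at the last vertex --- and (ii) the polynomial factors genuinely stay polynomial. The only point that needs a word of care is that $f$ in the recurrence is evaluated on $2^{|V|}$ pairs over the course of the algorithm but only once per (set, last-vertex) pair, so the claimed $poly(n)$ per-subset cost is honest; I would state this explicitly rather than fold it silently into $\bigO^*$. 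With these checks in place, \cref{th:classical-exp-time-space} follows.
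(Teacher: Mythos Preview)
Your proposal is correct and follows essentially the same approach as the paper: the paper also derives the result by processing subsets of $V$ in order of increasing size, applying \cref{eq:OSCM-recurrence} with $k=|S|-1$ so that each entry costs $\bigO(|S|\,poly(n))$ time, and summing over the at most $2^{n_V}$ subsets. Your write-up is more detailed (in particular on the space accounting and on recovering $\pi_V$ via backpointers), but there is no substantive difference in the argument.
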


We remark that~\cref{th:classical-exp-time-space} could also be derived by the dynamic programming framework of Bodlaender et al.~(\cite{DBLP:journals/mst/BodlaenderFKKT12} for linear ordering problems (which exploits a recurrence only involving sets of cardinality $|S|-1$). However, as shown above, in order to exploit~\cref{lem:quantum-dp-lemma}, we needed to introduce the more general recurrence given by~\cref{eq:OSCM-recurrence}. %

Next, we compare our quantum dynamic programming algorithm against the current best FPT result~\cite{DBLP:journals/algorithmica/KobayashiT15} which solves OSCM in $\bigO(k2^{\sqrt{2k}}+n)$ time, where $k$ is the maximum number of crossings allowed in the sought solution.

\begin{corollary}
The algorithm of~\cref{th:quantum-exp-time-space} is asymptotically more time-efficient than the FPT algorithm parameterized by the number $k$ of crossings in~\cite{DBLP:journals/algorithmica/KobayashiT15} when $k \in \Omega(n^2)$.
\end{corollary}

\begin{proof}
By~\cref{th:quantum-exp-time-space}, the computational complexity of our dynamic programming algorithm is $\bigO(poly(n_V)1.728^{n_V})$, where $poly(n)$ is a polynomial function. For ease of computation, we write
$\bigO(poly(n)1.728^{n_V}) = \bigO(2^{\log (poly(n))+n_V\log (1.728)})$.
Hence, upper bounding the time complexity of~\cite{DBLP:journals/algorithmica/KobayashiT15} with only $\bigO(2^{\sqrt{2k}})$ and focusing only on the exponents, we can verify when $\log (poly(n))+n_V\log (1.728)$ is less than $\sqrt{2k}$.
To do that we can upper bound $poly(n)$ with $n^c$ for some constant $c$.  We thus have that $\log (poly(n))+n_V\log (1.728) \leq c \log (n)+n_V\log (1.728) \leq c n +n\log (1.728) \leq \alpha n$, with $\alpha = c+\log (1.728)$. Hence, we have that it is convenient to use our quantum algorithm if $\alpha n < \sqrt{2k}$. That is,  when $k > \frac{\alpha^2}{2} n^2$. 
\end{proof}

\section[Quantum Divide and Conquer for OSCM]{Quantum Divide and Conquer for One-Sided Crossing Minimization}\label{sse:QDC}

Shimizu and Mori~\cite{DBLP:journals/algorithmica/ShimizuM22}  
used divided and conquer to obtain quantum exponential-time polynomial-space algorithms for coloring problems that do not rely on the use of QRAM.
In this section, we first generalize their ideas to obtain a framework designed to speedup, without using QRAM, some classical exponential-time polynomial-space divide and conquer algorithms for set problems. Then, we show that OSCM is a set problem over $V$ that falls within this framework. We use this fact to derive a quantum algorithm ~(\cref{th:quantum-exp-time-pol-space}) that improves the time bounds of the corresponding classical singly-exponential algorithm ~(\cref{th:classical-exp-time-pol-space}), while maintaining polynomial space complexity.

\subparagraph{Quantum divide and conquer for set problems.} 
The quantum divide and conquer framework we present hereafter can be used for set problems with the following features. Consider a problem $\mathcal P$ defined for a set $X$. The
nature of $\mathcal P$ must allow
determining the solution for $X$ by (i)
splitting $X$ into all possible pairs $(S,X \setminus S)$ of subsets of $X$, where $|S| = |X|/2$, (ii) recursively computing the optimal solution for all pairs $(S,X \setminus S)$, 
and (iii) combining the obtained solutions into a solution for $X$ using polynomial time for each of the pairs.
In the remainder, we provide a general quantum framework, defined by the following lemma.

\begin{lemma}\label{lem:quantum-dq-lemma}
Let $\mathcal P$ be an optimization problem (say a {\em \em minimization} problem) over a set $X$. Let $|X| = n$ and let $OPT_{\mathcal{P}}(X)$ be the optimal value for $\mathcal P$ over $X$. 
Suppose that there exists a polynomial-time computable function $f_{\mathcal P}: 2^X \times 2^X\rightarrow \mathbb{R}$ and a constant $c_{\mathcal{P}}$ such that, for any $S\subseteq X$, it holds that:
\begin{enumerate}
    \item If $|S| \leq c_{\mathcal P}$, then  $OPT_{\mathcal P}(S)=f_{\mathcal P} (S,\emptyset)$. 
    \item If $|S| > c_{\mathcal P}$, then  
\end{enumerate}

\begin{equation}\label{eq:divide-recurrence}
OPT_{\mathcal{P}}(S)=\min_{W\subset S, |W|= \frac{|S|}{2}} \{OPT_{\mathcal{P}}(W)+OPT_{\mathcal{P}}(S\setminus W)+
f_{\mathcal P}(W,S \setminus W)\}
\end{equation}

\noindent We have that, $OPT_{\mathcal{P}}(X)$ can be computed by a quantum algorithm without using QRAM in $\bigO^*(2^n)$ time and polynomial space.

\end{lemma}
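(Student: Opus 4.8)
The plan is to design a recursive quantum procedure that mirrors the classical divide-and-conquer recurrence of~\cref{eq:divide-recurrence}, but replaces each minimization over the $\binom{|S|}{|S|/2}$ balanced splits with an application of Quantum Minimum Finding (\cref{th:quantum-min-find}). Concretely, I would define a procedure $\texttt{QDC}(S)$ that, on a set $S$ with $|S| \le c_{\mathcal P}$, returns $f_{\mathcal P}(S,\emptyset)$ directly in polynomial time, and otherwise returns the QMF-value over the domain $D = \{W \subset S : |W| = |S|/2\}$ of the function $W \mapsto \texttt{QDC}(W) + \texttt{QDC}(S\setminus W) + f_{\mathcal P}(W, S\setminus W)$. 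The key point is that QMF only needs an oracle that \emph{evaluates} this function on a given $W$; each such evaluation triggers two recursive calls on sets of half the size, so no table of partial results is ever materialized, and the space used is only what is needed to hold the current recursion stack plus the QMF workspace — which is polynomial since the recursion depth is $O(\log n)$ and each level stores a description of a subset (polynomial in $n$ bits) plus poly-size QMF registers. This is exactly why QRAM is not required: unlike~\cref{lem:quantum-dp-lemma}, we never store exponentially many precomputed values.

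The next step is the running-time analysis via the recursion for the number of oracle applications. Let $T(m)$ denote the time to run $\texttt{QDC}$ on a set of size $m$. By~\cref{th:quantum-min-find}, the QMF call at the top makes $O\!\big(\sqrt{\binom{m}{m/2}}\big) = O^*(\sqrt{2^m}) = O^*(2^{m/2})$ oracle applications, and each oracle application costs $2\,T(m/2) + poly(n)$ for the two recursive calls plus the evaluation of $f_{\mathcal P}$. Hence $T(m) = O^*\!\big(2^{m/2}\big) \cdot \big(2\,T(m/2) + poly(n)\big)$, with base case $T(m) = poly(n)$ for $m \le c_{\mathcal P}$. Unrolling this, the dominant term at depth $i$ contributes a factor $\prod_{j=0}^{i-1} 2^{(m/2^j)/2}$ of oracle calls reaching sets of size $m/2^i$; the exponent is $\tfrac12 \sum_{j\ge 0} m/2^j = \tfrac12 \cdot 2m = m$ in the limit, so the product telescopes to $O^*(2^m)$. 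Setting $m = n$ gives the claimed $O^*(2^n)$ time bound, and the polynomial-space claim follows from the stack-depth observation above. I would also remark that the bounded-error nature of QMF propagates through the $O(\log n)$ levels of recursion with only a polynomial blow-up in the number of repetitions needed to keep the total error below a constant (standard amplification), so the overall algorithm remains bounded-error in $O^*(2^n)$ time.

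The main obstacle I anticipate is the error-propagation and correctness argument for \emph{nested} QMF calls whose oracle is itself a bounded-error quantum subroutine rather than an exact classical function. A naive composition would let errors compound multiplicatively across the $O(\log n)$ recursion levels and across the $O^*(2^n)$ oracle calls; the fix is to amplify each QMF invocation so that its failure probability is, say, inverse-exponentially small relative to the number of times it is called, which costs only a polynomial ($O(n)$) overhead per call since we need $O(\log(\text{number of calls})) = O(n)$ repetitions — and this is absorbed into the $O^*(\cdot)$ notation. One must also be slightly careful that when $|S|$ is odd the recurrence splits into parts of size $\lfloor |S|/2\rfloor$ and $\lceil |S|/2 \rceil$, but by the paper's notational convention (rounding $\tfrac{a}{b}$ up) and the fact that $\sum_j 1/2^j$ still converges, the geometric-series bound is unaffected. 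Finally, I would note that the time analysis is tight only because the geometric series in the exponent sums to exactly $m$; a sloppier bound like $T(m) \le 2^{m/2}\cdot T(m/2)^2$ without tracking the additive $poly(n)$ carefully would still yield $O^*(2^n)$, so the calculation is robust.
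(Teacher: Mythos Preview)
Your proposal is correct and follows essentially the same approach as the paper: a recursive procedure that applies QMF at each level over the balanced splits of~\cref{eq:divide-recurrence}, with the recurrence $Q(k)\le \sqrt{O\bigl(\binom{k}{k/2}\bigr)}\cdot\bigl(Q(\lfloor k/2\rfloor)+Q(\lceil k/2\rceil)+poly(k)\bigr)$ unrolling to $O^*(2^n)$. The paper's space argument is more explicit about the quantum-circuit realization (it lays out registers $A_{i,j}$ indexed by nodes of the depth-$\log n$ recursion tree and counts qubits directly), and it omits the error-amplification discussion you add, but the underlying argument is the same.
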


\begin{proof}

The algorithm for the proof of the lemma is presented as~\cref{alg:quantum-dq-over-sets} and is based on the recurrence in~\cref{eq:divide-recurrence}. The algorithm works recursively as follows. If the input set $X$ is sufficiently small, i.e., $|X|\leq c_{\mathcal{P}}$, then the optimal value for $X$ is computed directly as $f_{\mathcal{P}}(X,\emptyset)$. Otherwise, it uses QMF to find the optimal pair $(S,X \setminus S)$ of subsets of $X$ that determines $OPT_{\mathcal{P}}(X)$ according to~\cref{eq:divide-recurrence}, where the values $OPT_{\mathcal{P}}(S)$ and $OPT_{\mathcal{P}}(X\setminus S)$ have been recursively computed.

\begin{algorithm}[tb]
\caption{The quantum algorithm of~\cref{lem:quantum-dq-lemma}.}\label{alg:RecursiveOSCM}
\begin{algorithmic}[1]
\Procedure{\texttt{QuantumDC}}{$X$}:
\State \textbf{Input}: Set $X$ of size $n$; \textbf{Output}: the value $OPT_{\mathcal{P}}(X)$.
\If{$|S| \leq c_{\mathcal{P}}$ }
\State \Return $f_{\mathcal P}(S,\emptyset)$
\EndIf
\State \Return the result of $\texttt{QMF}$ over all $W \subset S$ with $|W| = \frac{|S|}{2}$ to find 
$$
\min_{W\subset S, |W|= \frac{|S|}{2}} \{\texttt{QuantumDC}(W)+\texttt{QuantumDC}(S\setminus W)+
f_{\mathcal P}(W,S \setminus W)\}
$$
\EndProcedure
\end{algorithmic}
\label{alg:quantum-dq-over-sets}
\end{algorithm}

The running time $Q(k)$ of~\cref{alg:quantum-dq-over-sets} when $|X| = k$ obeys the following recurrence:

$$Q(k)\leq \sqrt{\bigO\left(\binom{k}{{k/2}}\right)}\Big(Q(\myfloor{k/2}) + Q({k/2}) + poly(k)\Big)$$

\noindent Hence, $Q(k)\leq 2^k poly(k)$, and the total running time of~\cref{alg:quantum-dq-over-sets} is bounded by $\bigO^*(2^n)$. %

Finally, the space complexity of 
\cref{alg:quantum-dq-over-sets} (procedure $\texttt{QuantumDC}$) 
can be proved polynomial as follows. A schematic representation of the quantum circuit implementing procedure $\texttt{QuantumDC}$ is shown in~\cref{fig:recursion}.
The execution of $\texttt{QuantumDC}$ determines a rooted binary tree $\mathcal T$ whose nodes are associated with its recursive calls (see~\cref{fig:recursion-tree}). Each such a call corresponds to a circuit in~\cref{fig:recursion}. We denote by $\texttt{QDC(i,j)}$ the circuit, at the $i^{th}$-level of the recursion tree $\mathcal T$ with $i=0,\dots,\log n - 1$, associated with the $j^{th}$-call, with $j \in 0,\dots,{2^i}-1$.
The input to each of such circuits consists of a set of registers defined as follows.
For each $i = 0, 1,\dots,\log n -1$ and $j = 0,1,\dots, 2^i -1 $, there exists a register $A_{i,j}$ with $\frac{n}{2^i}$ qubits. It stores a superposition corresponding to a subset $S_{i,j}$ of $X$ (to be defined later) of size $\frac{n}{2^i}$, which represents all possible ways of splitting the subset into two equal-sized subsets.
Specifically, a status $0$ for $A_{i,j}[k]$ corresponds to assigning the $k^{th}$-element of the subset associated with $A_{i,j}$ to one side of the split, while a status $1$ of $A_{i,j}[k]$ corresponds to assigning the $k^{th}$-element of such a subset to the other side of the split. A suitable quantum circuit allows the qubits to assume only the states where the number of zeros is equal to the number of ones, see e.g.~\cite{CaroppoLB24}.
In~\cref{fig:recursion-tree}, we associate the split defined by the status-$0$ qubits and the split defined by the status-$1$ qubits with the left and right child of a node, respectively. Moreover, in~\cref{fig:recursion-tree}, each edge of $\mathcal T$ is labeled with the registers representing the corresponding splits.

\begin{figure}[tb!]	
		\centering
	\includegraphics[page=8, width=\textwidth]{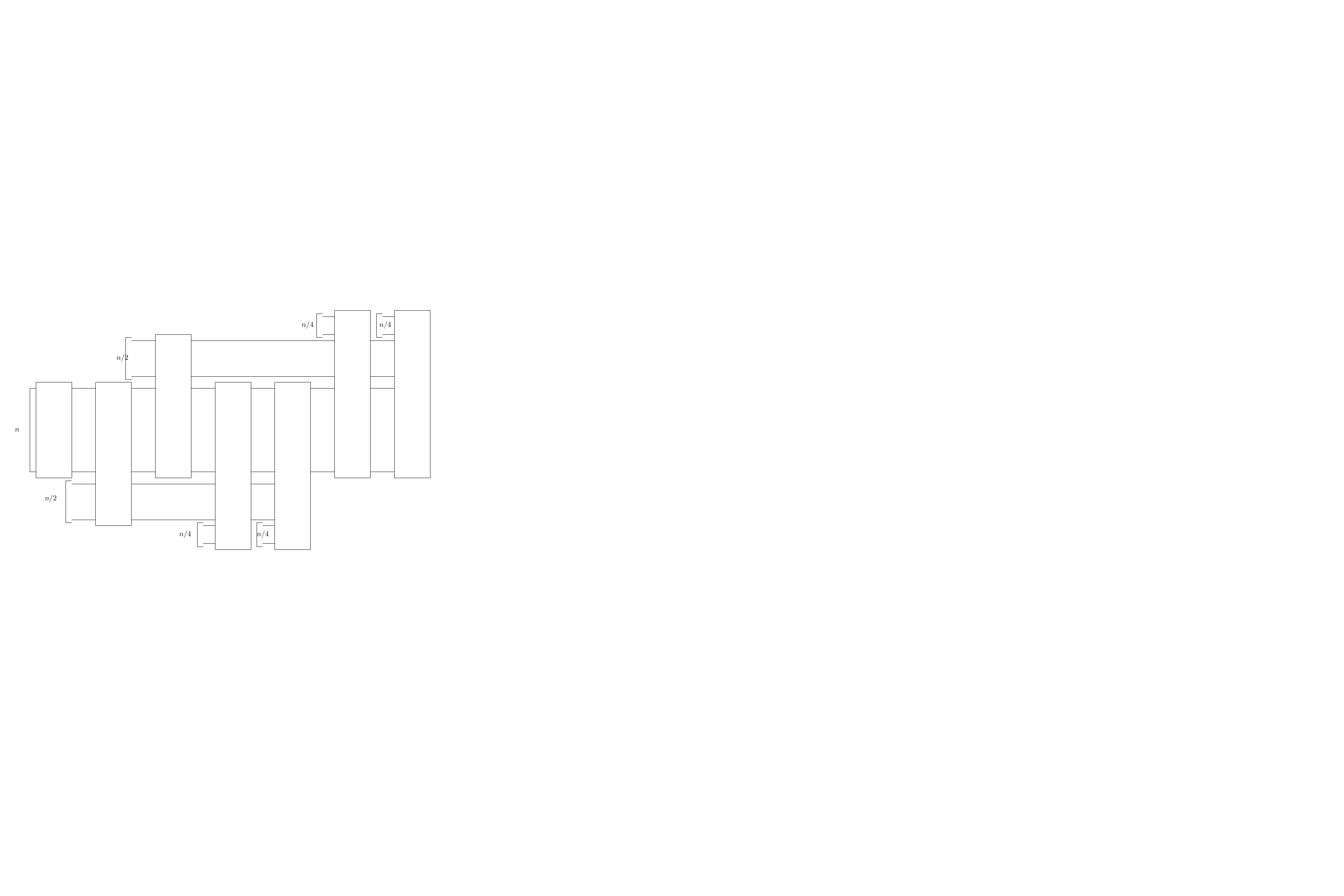}
	\caption{Schematic representation of the circuit realizing~\cref{alg:RecursiveOSCM} for a set $X$ with $n = 16$. The qubits in $L_{i,j}$ in input to the circuit $\texttt{QDC(i,j)}$ are incident to its left boundary. Ancilla qubits are omitted. }
	\label{fig:recursion}
\end{figure}

\begin{figure}[tb!]	
		\centering
	\includegraphics[page=11, width=\textwidth]{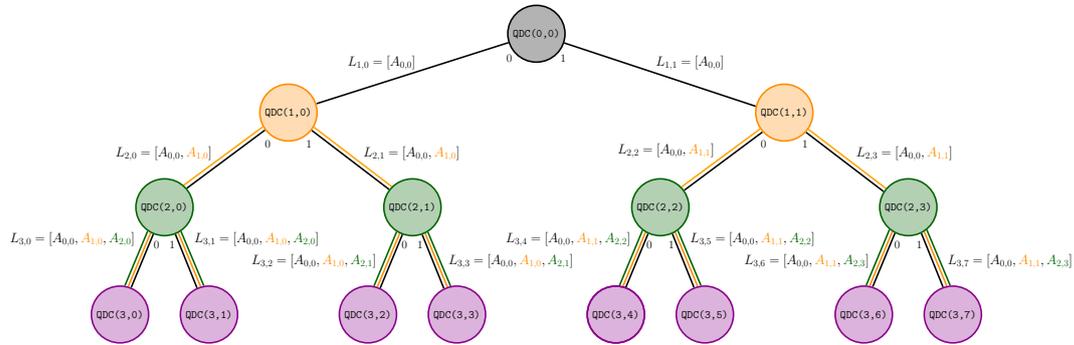}
	\caption{The tree $\mathcal T$ whose nodes are associated with the recursive calls of~\cref{alg:RecursiveOSCM}.}
	\label{fig:recursion-tree}
\end{figure}

The input of $\texttt{QDC(i,j)}$ is a set $L_{i,j}$ of $i+1$ registers of size $n$, $\frac{n}{2}$, $\frac{n}{4}$, $\dots$, $\frac{n}{2^i}$, respectively; see~\cref{fig:recursion}. 
The registers in input to $\texttt{QDC(i,j)}$ can be recursively defined as follows.
The register $A_{i-1,\myfloor{j/2}}$ belongs to $L_{i,j}$ and it is the smallest register in this set. %
Also, if $A_{c,d}$ with $c \geq 1$ belongs to $L_{i,j}$, then $A_{c-1,\myfloor{d/2}}$ also belong to $L_{i,j}$.
In particular, observe that $L_{i,j}$ always contains $A_{0,0}$. 

The circuit $\texttt{QDC(i,j)}$ solves problem $\mathcal P$ on a subset $S_{i,j}$ of $X$ of size $\frac{n}{2^i}$, which is  defined by the states of the registers in $L_{i,j}$.
In particular, the set $S_{i,j}$ can be determined by following the path of $\mathcal T$ connecting $\texttt{QDC(i,j)}$ to the root, and observing that the parity of $j$ determines whether a node in the path is the left or right child of its parent.
For example, consider the circuit $\texttt{QDC(2,2)}$. We show how to determine $S_{2,2}$. Observe that (i) $\texttt{QDC(1,1)}$ is the right child of $\texttt{QDC(0,0)}$, and (ii) $\texttt{QDC(2,2)}$ is the left child of $\texttt{QDC(1,1)}$. 
Also observe that $L_{i,j} = [A_{0,0}, A_{1,1}]$.
To obtain $S_{2,2}$, first by (i) we first consider the subset $S'$ of $X$ corresponding to the qubits in $A_{0,0}$ whose status is $1$, and then by (ii) we obtain $S_{2,2}$ as the subset of $S'$ corresponding to the qubits in $A_{1,1}$ whose status is $0$.

We can finally bound the space complexity of~\cref{alg:quantum-dq-over-sets}, in terms of both classic bits and qubits. Since our algorithm does not rely on external classic memory, we only need to bound the latter. 
We have that the number of circuits $\texttt{QDC(i,j)}$ that compose the circuit implementing the algorithm ~(\cref{fig:recursion}) is the same as the number $\rho$ of nodes of the recursion tree $\mathcal T$. Since $\mathcal T$ is a complete binary tree of height $\log n$, we have that $\rho = 2n - 1$. The number of qubits in $L_{i,j}$ that define the subset of $X$ in input to each circuit $\texttt{QDC(i,j)}$ is at most $\sum^{\log{n}}_{i=0}\frac{n}{2^i} = 2n$. Moreover, the number of ancilla qubits used by each circuit $\texttt{QDC(i,j)}$, omitted in~\cref{fig:recursion}, are polynomial in the number of qubits in the set of registers $L_{i,j}$ in input to $\texttt{QDC(i,j)}$, as they only depend on the size of $S_{i,j}$, which is at most $2n$.  Therefore, the overall space complexity of~\cref{alg:quantum-dq-over-sets} is polynomial. 
\end{proof}

\subparagraph{Quantum divide and conquer for OSCM.}\label{se:qdANDc}
We now describe a quantum divide and conquer algorithm for OSCM. We start by showing that the framework of~\cref{lem:quantum-dq-lemma} can be applied to the optimization problem corresponding to OSCM, which we called {\sc MinOSCM} in~\cref{sse:QDP}. This can be done in a similar fashion as for the~\cref{lem:quantum-dp-lemma}. In particular, the fact that the {\sc MinOSCM} problem is a set problem over $V$ immediately follows from the observation that~\cref{eq:divide-recurrence} is the restriction of~\cref{eq:dynamic-recurrence} to the case in which $k = |W| = \frac{|S|}{2}$. Moreover, recall that $\gamma$ can be computed in $poly(n)$ time.

The execution of \cref{alg:RecursiveOSCM} produces as output a superposition of the registers $A_{i,j}$ such that the state with the highest probability of being returned, if measured, corresponds to an ordering $\pi_V$ of $V$ that yields a drawing with the minimum number of crossings. In the following, we show how to obtain $\pi_V$ from such a state. Recall that, each node $\texttt{QDC(i,j)}$ of $\mathcal T$ is associated with a subset $S_{i,j}$ of $X$.
In particular, the set $S_{0,0}$ for the root node $\texttt{QDC(0,0)}$ coincides with the entire $X$.
To obtain $\pi_V$, we visit $\mathcal T$ in pre-order starting from the root.
When visiting a node of $\mathcal T$, we split the corresponding set $S_{i,j}$ into two subsets $\cev{S_{i+1,2j}}$
and $\vec{S_{i+1,2j+1}}$ based on the value of $A_{i,j}$. In particular, we have that $\cev{S_{i+1,2j}}$ contains the $k^{th}$-vertex in $S_{i,j}$ if $A_{i,j}[k]=0$ and that 
$\vec{S_{i+1,2j+1}}$ contains the $k^{th}$-vertex in $S_{i,j}$ if $A_{i,j}[k]=1$.
We require that, in $\pi_V$, the set $\cev{S_{i+1,2j}}$ precedes the set $\vec{S_{i+1,2j+1}}$.
When the visit reaches the leaves of $\mathcal T$ the left-to-right precedence among vertices in $V$, which defines $\pi_V$, is thus fully specified.

Altogether we have proved the following.

\begin{theorem}\label{th:quantum-exp-time-pol-space}
    There is a bounded-error quantum algorithm that solves OSCM in $\bigO^*(2^{n_V})$ time and polynomial space.
\end{theorem}

Observe that~\cref{eq:divide-recurrence} of~\cref{lem:quantum-dq-lemma} can also be used to derive a classical divide and conquer algorithm for OSCM. 
Clearly, if the input vertex set $X$ is sufficiently small then $OPT(X)$ can be computed in $poly(n)$.
Otherwise, the algorithm considers all the possible splits $(S,X\setminus S)$
of $X$ into two equal-sized subsets, recursively computes the optimal solution for the subinstances induced by each subset and the value $\gamma(\pi_U,S,X \setminus S)$, and then obtains the optimal solution for $X$ by computing the minimum of~\cref{eq:divide-recurrence} over all the considered splits. Clearly, this algorithm can be modified to also return an ordering $\pi_V$ that achieves $OPT(V)$.

The running time of the above algorithm can be estimated as follows. Let $T(k)$ be the running time of the algorithm when $|X| = k$. Clearly, if $k$ is sufficiently small, say smaller than some constant, then $C(k) = poly(k)$. Otherwise, we have that: 

$$T(k)\leq \binom{k}{{k/2}}\Big(T(k/2) + T(\myceil{k/2}) + poly(k)\Big)$$

\noindent Hence, $T(k)\leq 4^k poly(k)$, and the total running time of  the algorithm
is bounded by $\bigO^*(4^n)$.

Therefore, we have the following.

\begin{theorem}\label{th:classical-exp-time-pol-space}
 There is a classical algorithm that solves OSCM in $\bigO^*(4^{n_V})$ time and polynomial space.
\end{theorem}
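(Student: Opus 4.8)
The plan is to instantiate the classical counterpart of the divide-and-conquer scheme from \cref{lem:quantum-dq-lemma} directly on \textsc{MinOSCM}, using the recurrence of \cref{eq:divide-recurrence} with $\gamma$ playing the role of $f_{\mathcal P}$. First I would recall, from \cref{lem:separability} and the discussion around \cref{eq:OSCM-recurrence}, that for every $S \subseteq V$ with $|S|$ larger than a suitable constant $c_{\mathcal P}$ we have $OPT(S)=\min_{W\subset S,\,|W|=|S|/2}\{OPT(W)+OPT(S\setminus W)+\gamma(\pi_U,W,S\setminus W)\}$, while for $|S|\le c_{\mathcal P}$ the value $OPT(S)$ — together with a linear ordering of $S$ achieving it — is computed by brute force over the $\bigO(1)$ orderings of $S$ in $poly(n)$ time. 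This yields a recursive algorithm: on input $S$, if $|S|\le c_{\mathcal P}$ return the brute-force optimum; otherwise iterate over all balanced splits $(W,S\setminus W)$, recursively compute $OPT(W)$ and $OPT(S\setminus W)$, evaluate $\gamma(\pi_U,W,S\setminus W)$ in polynomial time, and keep the split minimizing the sum. The required ordering $\pi_V$ is reconstructed along the recursion by concatenating the orderings returned for the two halves, placing $W$ before $S\setminus W$.

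For the space bound I would argue that the recursion tree has depth $\log n_V$, and that the balanced splits of a set of size $m$ can be generated one at a time by a standard combination-enumeration routine that maintains only $\bigO(m)$ bits of state; thus the algorithm never materializes the $\binom{m}{m/2}$ candidate splits simultaneously. At any moment the call stack holds at most one active subproblem per level, each of size $\le n_V$, together with the current best split and best ordering for that level, all of polynomial size. Summing over the $\log n_V$ levels gives polynomial space.

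For the time bound I would set up, for $|S|=k$, the recurrence $T(k)\le \binom{k}{k/2}\bigl(T(\myfloor{k/2})+T(\myceil{k/2})+poly(k)\bigr)$, with $T(k)=poly(k)$ for $k\le c_{\mathcal P}$, exactly as stated just before the theorem. Unrolling it over the $\log n_V$ levels and using $\binom{k}{k/2}=\bigO^*(2^k)$, the product of the branching factors telescopes to $\bigO^*\bigl(2^{\,n_V+n_V/2+n_V/4+\cdots}\bigr)=\bigO^*(2^{2n_V})=\bigO^*(4^{n_V})$, while the accumulated $poly$ terms and the $2^{\log n_V}=poly(n_V)$ factor coming from the two recursive calls per node are absorbed into the $\bigO^*$ notation. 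Hence $T(n_V)=\bigO^*(4^{n_V})$, and the algorithm can additionally be made to output an optimal $\pi_V$ as described above.

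The only non-routine point — and thus the main obstacle — is the space analysis: one must be careful that enumerating the exponentially many balanced splits and recursing into both halves does not covertly blow up memory. This forces a commitment to on-the-fly generation of splits (polynomial space per split) and to discarding each explored split before moving to the next, rather than caching subproblem values as the dynamic-programming algorithm of \cref{th:classical-exp-time-space} does; it is precisely this trade-off that turns the $\bigO^*(2^{n_V})$-space, $\bigO^*(2^{n_V})$-time bound of \cref{th:classical-exp-time-space} into the $poly$-space, $\bigO^*(4^{n_V})$-time bound asserted here.
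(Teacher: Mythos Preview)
Your proposal is correct and follows essentially the same approach as the paper: instantiate the classical divide-and-conquer scheme of \cref{lem:quantum-dq-lemma} with $\gamma$ as $f_{\mathcal P}$, set up the recurrence $T(k)\le \binom{k}{k/2}(T(\lfloor k/2\rfloor)+T(\lceil k/2\rceil)+poly(k))$, and conclude $T(n_V)\le \bigO^*(4^{n_V})$. If anything, your explicit argument for polynomial space (on-the-fly enumeration of balanced splits, $\log n_V$ recursion depth) is more detailed than what the paper provides.
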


\section{Conclusions}

In this paper we have presented singly-exponential quantum algorithms for OSCM, exploiting both quantum dynamic programming and quantum divide and conquer. We believe that this research will spark further interest in the design of exact quantum algorithms for hard graph drawing problems. In the following, we highlight two meaningful applications of our results.

\subparagraph{Problem OSSCM.} A generalization of the OSCM problem, called OSSCM and formally defined below,  considers a bipartite graph whose edge set is partitioned into $h$ color classes $E_1,\dots,E_h$, and asks for a 2-level drawing respecting a fixed linear ordering of one of the parts of the vertex set, with the minimum number of \mbox{crossings between edges of the {\em same color}.} %
\problemQuestion{\sc One-Sided Simultaneous Crossing Minimization (OSSCM)}%
{A bipartite graph $G=(U,V,E = E_1\cup E_2\cup \dots \cup E_h)$, a linear ordering $\pi_U : U \leftrightarrow [n_U]$.}%
{An ordering $\pi_V : V \leftrightarrow [n_V]$ such that $\sum_{i=1}^h cr_{E_i}(\pi_U,\pi_V)$ is minimum.}

\medskip
Clearly, OSSCM is a set problem over $V$ whose optimal solution admits a recurrence of the same form as~\cref{eq:dynamic-recurrence,eq:divide-recurrence}. Thus,~\cref{th:classical-exp-time-pol-space,th:classical-exp-time-space} can be extended~to~OSSCM.

\subparagraph{Problem TLCM.} Caroppo et al.~\cite{CaroppoLB24} gave a quantum algorithm to tackle the unconstrained version OSCM, called TLCM and formally defined below, in which both parts of the vertex set are allowed to permute. This algorithm runs in $\bigO^*(2^{\frac{n \log n}{2}})$ time, offering a quadratic speedup over classic exhaustive search. However, the existence of an exact singly-exponential algorithm for TLCM, both classically and quantumly, still appears to be an elusive goal.

\problemQuestion{\sc Two-Level Crossing Minimization (TLCM)}%
{A bipartite graph $G=(U,V,E)$.}%
{Orderings $\pi_U : U \leftrightarrow [n_U]$ and
$\pi_V : V \leftrightarrow [n_V]$ such that $cr(G,\pi_U,\pi_V)$ is minimum.}

\medskip
\cref{th:classical-exp-time-space} allows us to derive the following implication. Consider the smallest between $U$ and $V$, say $U$. Then, we can solve TLCM by performing QMF over all $n_U !$ permutations of $U$ using the quantum algorithm of~\cref{th:classical-exp-time-space} as an oracle. As~$n_U! \leq 2^{n_U \log n_U}$, this immediately yields an algorithm whose running time is $\bigO^*(2^{\frac{n_U\log n_U}{2}}1.728^{n_V})=\bigO^*(2^{\frac{n_U\log n_U}{2}}2^{n_V \log_2 1.728})$. Therefore, as long as $n_V\log_2 1.728 \geq \frac{n_U\log n_U}{2} $, TLCM has a bounded-error quantum algorithm whose running time is $\bigO^*(2^{2(n_V \log_2 1.728)})=\bigO^*(2.986^{n_V})$, and thus singly exponential.

\bibliographystyle{plainurl}
\bibliography{bibliography}

\begin{thebibliography}{10}

\bibitem{DBLP:conf/icalp/AlonLS09}
Noga Alon, Daniel Lokshtanov, and Saket Saurabh.
\newblock Fast {FAST}.
\newblock In Susanne Albers, Alberto Marchetti{-}Spaccamela, Yossi Matias,
  Sotiris~E. Nikoletseas, and Wolfgang Thomas, editors, {\em Automata,
  Languages and Programming, 36th International Colloquium, {ICALP} 2009,
  Rhodes, Greece, July 5-12, 2009, Proceedings, Part {I}}, volume 5555 of {\em
  Lecture Notes in Computer Science}, pages 49--58. Springer, 2009.
\newblock \href {https://doi.org/10.1007/978-3-642-02927-1\_6}
  {\path{doi:10.1007/978-3-642-02927-1\_6}}.

\bibitem{DBLP:conf/soda/AmbainisBIKPV19}
Andris Ambainis, Kaspars Balodis, Janis Iraids, Martins Kokainis, Krisjanis
  Prusis, and Jevgenijs Vihrovs.
\newblock Quantum speedups for exponential-time dynamic programming algorithms.
\newblock In Timothy~M. Chan, editor, {\em Proceedings of the Thirtieth Annual
  {ACM-SIAM} Symposium on Discrete Algorithms, {SODA} 2019, San Diego,
  California, USA, January 6-9, 2019}, pages 1783--1793. {SIAM}, 2019.
\newblock \href {https://doi.org/10.1137/1.9781611975482.107}
  {\path{doi:10.1137/1.9781611975482.107}}.

\bibitem{DBLP:journals/mst/BodlaenderFKKT12}
Hans~L. Bodlaender, Fedor~V. Fomin, Arie M. C.~A. Koster, Dieter Kratsch, and
  Dimitrios~M. Thilikos.
\newblock A note on exact algorithms for vertex ordering problems on graphs.
\newblock {\em Theory Comput. Syst.}, 50(3):420--432, 2012.
\newblock URL: \url{https://doi.org/10.1007/s00224-011-9312-0}, \href
  {https://doi.org/10.1007/S00224-011-9312-0}
  {\path{doi:10.1007/S00224-011-9312-0}}.

\bibitem{CaroppoLB24}
Susanna Caroppo, Giordano {Da Lozzo}, and Giuseppe {Di Battista}.
\newblock Quantum graph drawing.
\newblock In Ryuhei Uehara, Katsuhisa Yamanaka, and Hsu{-}Chun Yen, editors,
  {\em {WALCOM:} Algorithms and Computation - 18th International Conference and
  Workshops on Algorithms and Computation, {WALCOM} 2024, Kanazawa, Japan,
  March 18-20, 2024, Proceedings}, volume 14549 of {\em Lecture Notes in
  Computer Science}, pages 32--46. Springer, 2024.
\newblock \href {https://doi.org/10.1007/978-981-97-0566-5\_4}
  {\path{doi:10.1007/978-981-97-0566-5\_4}}.

\bibitem{BattistaETT99}
Giuseppe {Di Battista}, Peter Eades, Roberto Tamassia, and Ioannis~G. Tollis.
\newblock {\em Graph Drawing: Algorithms for the Visualization of Graphs}.
\newblock Prentice-Hall, 1999.

\bibitem{Diestelbook}
Reinhard Diestel.
\newblock {\em Graph Theory, 4th Edition}, volume 173 of {\em Graduate texts in
  mathematics}.
\newblock Springer, 2012.

\bibitem{DBLP:conf/gd/DujmovicFK03}
Vida Dujmovic, Henning Fernau, and Michael Kaufmann.
\newblock Fixed parameter algorithms for one-sided crossing minimization
  revisited.
\newblock In Giuseppe Liotta, editor, {\em Graph Drawing, 11th International
  Symposium, {GD} 2003, Perugia, Italy, September 21-24, 2003, Revised Papers},
  volume 2912 of {\em Lecture Notes in Computer Science}, pages 332--344.
  Springer, 2003.
\newblock \href {https://doi.org/10.1007/978-3-540-24595-7\_31}
  {\path{doi:10.1007/978-3-540-24595-7\_31}}.

\bibitem{DBLP:journals/jda/DujmovicFK08}
Vida Dujmovic, Henning Fernau, and Michael Kaufmann.
\newblock Fixed parameter algorithms for one-sided crossing minimization
  revisited.
\newblock {\em J. Discrete Algorithms}, 6(2):313--323, 2008.
\newblock URL: \url{https://doi.org/10.1016/j.jda.2006.12.008}, \href
  {https://doi.org/10.1016/J.JDA.2006.12.008}
  {\path{doi:10.1016/J.JDA.2006.12.008}}.

\bibitem{DBLP:conf/gd/DujmovicW02}
Vida Dujmovic and Sue Whitesides.
\newblock An efficient fixed parameter tractable algorithm for 1-sided crossing
  minimization.
\newblock In Stephen~G. Kobourov and Michael~T. Goodrich, editors, {\em Graph
  Drawing, 10th International Symposium, {GD} 2002, Irvine, CA, USA, August
  26-28, 2002, Revised Papers}, volume 2528 of {\em Lecture Notes in Computer
  Science}, pages 118--129. Springer, 2002.
\newblock \href {https://doi.org/10.1007/3-540-36151-0\_12}
  {\path{doi:10.1007/3-540-36151-0\_12}}.

\bibitem{DBLP:journals/algorithmica/DujmovicW04}
Vida Dujmovic and Sue Whitesides.
\newblock An efficient fixed parameter tractable algorithm for 1-sided crossing
  minimization.
\newblock {\em Algorithmica}, 40(1):15--31, 2004.
\newblock URL: \url{https://doi.org/10.1007/s00453-004-1093-2}, \href
  {https://doi.org/10.1007/S00453-004-1093-2}
  {\path{doi:10.1007/S00453-004-1093-2}}.

\bibitem{DBLP:journals/corr/quant-ph-9607014}
Christoph D{\"{u}}rr and Peter H{\o}yer.
\newblock A quantum algorithm for finding the minimum.
\newblock {\em CoRR}, quant-ph/9607014, 1996.
\newblock URL: \url{http://arxiv.org/abs/quant-ph/9607014}.

\bibitem{DBLP:journals/algorithmica/EadesW94}
Peter Eades and Nicholas~C. Wormald.
\newblock Edge crossings in drawings of bipartite graphs.
\newblock {\em Algorithmica}, 11(4):379--403, 1994.
\newblock \href {https://doi.org/10.1007/BF01187020}
  {\path{doi:10.1007/BF01187020}}.

\bibitem{DBLP:conf/iwoca/FernauFLMPS10}
Henning Fernau, Fedor~V. Fomin, Daniel Lokshtanov, Matthias Mnich, Geevarghese
  Philip, and Saket Saurabh.
\newblock Ranking and drawing in subexponential time.
\newblock In Costas~S. Iliopoulos and William~F. Smyth, editors, {\em
  Combinatorial Algorithms - 21st International Workshop, {IWOCA} 2010, London,
  UK, July 26-28, 2010, Revised Selected Papers}, volume 6460 of {\em Lecture
  Notes in Computer Science}, pages 337--348. Springer, 2010.
\newblock \href {https://doi.org/10.1007/978-3-642-19222-7\_34}
  {\path{doi:10.1007/978-3-642-19222-7\_34}}.

\bibitem{Fukuzawa2023}
Shion Fukuzawa, Michael~T. Goodrich, and Sandy Irani.
\newblock Quantum tutte embeddings.
\newblock {\em CoRR}, abs/2307.08851, 2023.
\newblock URL: \url{https://doi.org/10.48550/arXiv.2307.08851}, \href
  {https://arxiv.org/abs/2307.08851} {\path{arXiv:2307.08851}}, \href
  {https://doi.org/10.48550/ARXIV.2307.08851}
  {\path{doi:10.48550/ARXIV.2307.08851}}.

\bibitem{PhysRevLett.100.160501}
Vittorio Giovannetti, Seth Lloyd, and Lorenzo Maccone.
\newblock Quantum random access memory.
\newblock {\em Phys. Rev. Lett.}, 100:160501, Apr 2008.
\newblock URL: \url{https://link.aps.org/doi/10.1103/PhysRevLett.100.160501},
  \href {https://doi.org/10.1103/PhysRevLett.100.160501}
  {\path{doi:10.1103/PhysRevLett.100.160501}}.

\bibitem{DBLP:conf/stoc/Grover96}
Lov~K. Grover.
\newblock A fast quantum mechanical algorithm for database search.
\newblock In Gary~L. Miller, editor, {\em {STOC} 1996}, pages 212--219. {ACM},
  1996.
\newblock \href {https://doi.org/10.1145/237814.237866}
  {\path{doi:10.1145/237814.237866}}.

\bibitem{DBLP:reference/algo/Harrow16}
Aram~W. Harrow.
\newblock Quantum algorithms for systems of linear equations.
\newblock In {\em Encyclopedia of Algorithms}, pages 1680--1683. 2016.
\newblock \href {https://doi.org/10.1007/978-1-4939-2864-4\_771}
  {\path{doi:10.1007/978-1-4939-2864-4\_771}}.

\bibitem{DBLP:conf/gd/JungerM95}
Michael J{\"{u}}nger and Petra Mutzel.
\newblock Exact and heuristic algorithms for 2-layer straightline crossing
  minimization.
\newblock In Franz{-}Josef Brandenburg, editor, {\em Graph Drawing, Symposium
  on Graph Drawing, {GD} '95, Passau, Germany, September 20-22, 1995,
  Proceedings}, volume 1027 of {\em Lecture Notes in Computer Science}, pages
  337--348. Springer, 1995.
\newblock URL: \url{https://doi.org/10.1007/BFb0021817}, \href
  {https://doi.org/10.1007/BFB0021817} {\path{doi:10.1007/BFB0021817}}.

\bibitem{DBLP:conf/stoc/Kenyon-MathieuS07}
Claire Kenyon{-}Mathieu and Warren Schudy.
\newblock How to rank with few errors.
\newblock In David~S. Johnson and Uriel Feige, editors, {\em Proceedings of the
  39th Annual {ACM} Symposium on Theory of Computing, San Diego, California,
  USA, June 11-13, 2007}, pages 95--103. {ACM}, 2007.
\newblock \href {https://doi.org/10.1145/1250790.1250806}
  {\path{doi:10.1145/1250790.1250806}}.

\bibitem{DBLP:journals/algorithmica/KobayashiT15}
Yasuaki Kobayashi and Hisao Tamaki.
\newblock A fast and simple subexponential fixed parameter algorithm for
  one-sided crossing minimization.
\newblock {\em Algorithmica}, 72(3):778--790, 2015.
\newblock URL: \url{https://doi.org/10.1007/s00453-014-9872-x}, \href
  {https://doi.org/10.1007/S00453-014-9872-X}
  {\path{doi:10.1007/S00453-014-9872-X}}.

\bibitem{DBLP:books/daglib/0013517}
David J.~C. MacKay.
\newblock {\em Information theory, inference, and learning algorithms}.
\newblock Cambridge University Press, 2003.

\bibitem{DBLP:conf/gd/MunozUV01}
Xavier Mu{\~{n}}oz, Walter Unger, and Imrich Vrto.
\newblock One sided crossing minimization is np-hard for sparse graphs.
\newblock In Petra Mutzel, Michael J{\"{u}}nger, and Sebastian Leipert,
  editors, {\em Graph Drawing, 9th International Symposium, {GD} 2001 Vienna,
  Austria, September 23-26, 2001, Revised Papers}, volume 2265 of {\em Lecture
  Notes in Computer Science}, pages 115--123. Springer, 2001.
\newblock \href {https://doi.org/10.1007/3-540-45848-4\_10}
  {\path{doi:10.1007/3-540-45848-4\_10}}.

\bibitem{DBLP:conf/isaac/MutzelW98}
Petra Mutzel and Ren{\'{e}} Weiskircher.
\newblock Two-layer planarization in graph drawing.
\newblock In Kyung{-}Yong Chwa and Oscar~H. Ibarra, editors, {\em Algorithms
  and Computation, 9th International Symposium, {ISAAC} '98, Taejon, Korea,
  December 14-16, 1998, Proceedings}, volume 1533 of {\em Lecture Notes in
  Computer Science}, pages 69--78. Springer, 1998.
\newblock \href {https://doi.org/10.1007/3-540-49381-6\_9}
  {\path{doi:10.1007/3-540-49381-6\_9}}.

\bibitem{DBLP:books/daglib/0046438}
Michael~A. Nielsen and Isaac~L. Chuang.
\newblock {\em Quantum Computation and Quantum Information (10th Anniversary
  edition)}.
\newblock Cambridge University Press, 2016.

\bibitem{DBLP:journals/algorithmica/ShimizuM22}
Kazuya Shimizu and Ryuhei Mori.
\newblock Exponential-time quantum algorithms for graph coloring problems.
\newblock {\em Algorithmica}, 84(12):3603--3621, 2022.
\newblock URL: \url{https://doi.org/10.1007/s00453-022-00976-2}, \href
  {https://doi.org/10.1007/S00453-022-00976-2}
  {\path{doi:10.1007/S00453-022-00976-2}}.

\bibitem{DBLP:journals/tsmc/SugiyamaTT81}
Kozo Sugiyama, Shojiro Tagawa, and Mitsuhiko Toda.
\newblock Methods for visual understanding of hierarchical system structures.
\newblock {\em {IEEE} Trans. Syst. Man Cybern.}, 11(2):109--125, 1981.
\newblock \href {https://doi.org/10.1109/TSMC.1981.4308636}
  {\path{doi:10.1109/TSMC.1981.4308636}}.

\bibitem{tutte1963draw}
William~Thomas Tutte.
\newblock How to draw a graph.
\newblock {\em Proceedings of the London Mathematical Society}, 3(1):743--767,
  1963.

\bibitem{valls1996branch}
Vicente Valls, Rafael Mart{\'\i}, and Pilar Lino.
\newblock A branch and bound algorithm for minimizing the number of crossing
  arcs in bipartite graphs.
\newblock {\em European journal of operational research}, 90(2):303--319, 1996.
\newblock \href {https://doi.org/10.1016/0377-2217(95)00356-8}
  {\path{doi:10.1016/0377-2217(95)00356-8}}.

\bibitem{DBLP:journals/dam/Woeginger08}
Gerhard~J. Woeginger.
\newblock Open problems around exact algorithms.
\newblock {\em Discret. Appl. Math.}, 156(3):397--405, 2008.
\newblock URL: \url{https://doi.org/10.1016/j.dam.2007.03.023}, \href
  {https://doi.org/10.1016/J.DAM.2007.03.023}
  {\path{doi:10.1016/J.DAM.2007.03.023}}.

\end{thebibliography}

\clearpage
\appendix

\end{document}